\newcommand{\X}{\mathbf X}
\newcommand{\B}{\mathbf B}
\newcommand{\one}{\mathds{1}}
\newcommand{\I}{\mathcal I}
\newcommand{\II}{\mathbf I}
\newcommand{\0}{\mathbf 0}
\newtheorem{claim}{Claim}
\newtheorem{thm}{Theorem}
\newtheorem{prob}{Problem}
\newcommand{\V}{\mathbf V}
\newcommand{\W}{\mathbf W}
\newcommand{\U}{\mathbf U}
\newcommand{\D}{\mathbf D}
\newcommand{\bSigma}{\boldsymbol\Sigma}
\newcommand{\s}{\mathbf s}
\newcommand{\bb}{\mathbf b}
\newcommand{\A}{\mathbf A}
\newcommand{\st}{\mathrm{~s.t.~}}
\newcommand{\argmin}{\mathrm{argmin}}
\newcommand{\glpca}{\textsc{gl-SPCA}}
\newcommand{\glreg}{\textsc{gl-Reg}}
\newcommand{\wh}{\widehat}
\renewcommand{\r}{\mathbb{R}}
\newcommand{\n}{\mathcal{N}}
\newcommand{\Err}{\textsc{err}}
\newcommand{\Vcur}{\V_\textsc{cur}}
\newcommand{\Lcol}{\mathcal L_{\mathrm{col}}}
\title{CUR from a Sparse Optimization Viewpoint}
\author{
Jacob~Bien\thanks{Jacob Bien and Ya Xu contributed equally.} \\
Department of Statistics\\
Stanford University\\
Stanford, CA 94305 \\
\small{
\texttt{jbien@stanford.edu} } \\
\And
Ya~Xu$^*$\\
Department of Statistics\\
Stanford University\\
Stanford, CA 94305 \\
\small{
\texttt{yax.stanford@gmail.com} } \\
\And
Michael~W.~Mahoney \\
Department of Mathematics\\
Stanford University\\
Stanford, CA 94305 \\
\small{
\texttt{mmahoney@cs.stanford.edu} } \\
}
\begin{document}

\maketitle

\vspace{-4mm}
\begin{abstract}
The CUR decomposition provides an approximation of a matrix $\X$ that has low
reconstruction error and that is sparse in the sense that the
resulting approximation lies in the span of only a  few  columns of $\X$.
In this
regard, it appears to be similar to many sparse PCA methods.
However, CUR takes a randomized algorithmic
approach, whereas most sparse PCA methods are framed as convex
optimization problems.  In this paper, we try to
understand CUR from a sparse optimization viewpoint.
We show that CUR is implicitly optimizing a sparse
regression objective and, furthermore, cannot be
directly cast
as a sparse PCA method. We also observe that the sparsity attained by CUR
possesses an interesting structure, which leads us to
formulate a sparse PCA method
that
achieves a CUR-like sparsity.
\end{abstract}

\vspace{-4mm}
\section{Introduction}
\label{sec:introduction}
\vspace{-2mm}

CUR decompositions are a recently-popular class of randomized algorithms 
that approximate a data matrix $\X\in\r^{n\times p}$ by using only a small 
number of actual columns of $\X$~\cite{CUR_PNAS,DMM08_CURtheory_JRNL}.
CUR decompositions are often described as SVD-like low-rank decompositions 
that have the additional advantage of being easily interpretable to domain 
scientists.
The motivation to produce a more interpretable low-rank decomposition is 
also shared by sparse PCA (SPCA) methods, which are optimization-based 
procedures that have been of interest recently in statistics and machine 
learning.  

Although CUR and SPCA methods start with similar motivations, they proceed 
very differently.  
For example, most CUR methods have been randomized, and they take a purely 
algorithmic approach.  
By contrast, most SPCA methods start with a combinatorial optimization
problem, and they then solve a relaxation of this problem.  
Thus far, it has not been clear to researchers how the CUR and SPCA 
approaches are related.  
It is the purpose of this paper to understand CUR decompositions from a 
sparse optimization viewpoint, thereby elucidating the connection between 
CUR decompositions and the SPCA class of sparse optimization methods.

To do so, we begin by putting forth a combinatorial optimization problem 
(see \eqref{eq:allsubsets} below)
which CUR is implicitly approximately optimizing. 
This formulation will highlight two interesting features of CUR: 
first, CUR attains a distinctive pattern of sparsity, which has practical 
implications from the SPCA viewpoint; and
second, CUR is implicitly optimizing a regression-type objective. 
These two observations then lead to the three main contributions of this 
paper: 
(a) first, we formulate a non-randomized optimization-based version of CUR 
(see Problem 1: \glreg~ in Section~\ref{sec:cur-optim-fram}) that is based 
on a convex relaxation of the CUR combinatorial optimization problem; 
(b) second, we show that, in contrast to the original PCA-based motivation 
for CUR, CUR's implicit objective cannot be directly expressed in terms of 
a PCA-type objective (see Theorem~\ref{thm:cur-not-pca} in
Section~\ref{sec:connections-pca}); and
(c) third, we propose an SPCA approach (see Problem 2: \glpca~ in 
Section~\ref{sec:group-lasso-pca}) that achieves the sparsity structure of 
CUR within the PCA framework.  
We also provide a brief empirical evaluation of our two proposed objectives.
While our proposed \glreg~ and \glpca~ methods are promising in and of 
themselves, our purpose in this paper is not to explore them as 
alternatives to CUR;
instead, our goal is to use them to help clarify the connection between 
CUR and SPCA methods.

We conclude this introduction with some remarks on notation.  Given a
matrix $\A$, we use $\A_{(i)}$ to denote its $i$th row (as a row-vector) and
$\A^{(i)}$ its $i$th column.  Similarly, given a set of indices $\I$,
$\A_{\I}$ and $\A^{\I}$ denote the submatrices of $\A$ containing only
these $\I$ rows and columns, respectively.  Finally, we let $\Lcol(\A)$ denote the column space of $\A$.

\vspace{-3mm}
\section{Background}
\label{sec:background}
\vspace{-1mm}

In this section, we provide a brief background on CUR and
SPCA methods, with a particular emphasis on topics to which we will return
in subsequent sections.
Before doing so, recall that, given an input matrix $\X$, 
Principal Component Analysis (PCA) seeks the $k$-dimensional hyperplane 
with the lowest reconstruction error.
That is, it computes a $p\times k$ orthogonal matrix $\W$ that minimizes
\begin{align}\label{eq:PCA}
\Err(\W) = ||\X-\X\W\W^T||_F.
\end{align}
Writing the SVD of $\X$ as $\U\bSigma\V^T$, the minimizer of \eqref{eq:PCA} 
is given by $\V_k$, the first $k$ columns of $\V$.
In the data analysis setting, each column of $\V$ provides a particular
linear combination of the columns of $\X$.  
These linear combinations are often thought of as latent factors.  
In many applications, interpreting such factors is made much easier if they 
are comprised of only a small number of actual columns of $\X$, which is 
equivalent to $\V_k$ only having a small number of nonzero elements.

\vspace{-2mm}
\subsection{CUR matrix decompositions}
\label{sec:cur}
\vspace{-1mm}

CUR decompositions were proposed by Drineas and
Mahoney~\cite{CUR_PNAS,DMM08_CURtheory_JRNL} to provide a 
low-rank
approximation to a data matrix $\X$ by using only a small number of
actual columns and/or rows of $\X$.
Fast randomized variants~\cite{dkm_matrix3}, deterministic
variants~\cite{GT01}, Nystr\"{o}m-based variants~\cite{BW07_WKSHP, KMT09b}, and
heuristic variants~\cite{SXZF07_cmdmatrix} have also been considered.
Observing that the best rank-$k$ approximation to the SVD provides the best
set of $k$ linear combinations of all the columns, one can ask for the best
set of $k$ \emph{actual} columns.
Most formalizations of ``best'' lead to intractable combinatorial
optimization problems~\cite{CUR_PNAS}, but one can take advantage of
oversampling (choosing slightly more than $k$ columns) and randomness as
computational resources to obtain strong quality-of-approximation
guarantees.

\begin{thm}[Relative-error CUR~\cite{CUR_PNAS}]
\label{thm:CUR}
Given an arbitrary matrix $\X\in\r^{n\times p}$ and an integer
$k$,
there exists a randomized algorithm that chooses a random subset
$\I\subset\{1,\ldots,p\}$ of size
$c=O(k\log k\log(1/\delta)/\epsilon^2)$ such that
$\X^{\I}$, the $n\times c$
submatrix containing those $c$ columns of $\X$, satisfies
  \begin{align}
  \label{eq:err}
    ||\X-\X^{\I}\X^{\I+}\X||_F
    =    \min_{\B\in\r^{c\times p}}||\X-\X^\I\B||_F
    \leq (1+\epsilon)||\X-\X_k||_F,
  \end{align}
with probability at least $1-\delta$, where $\X_k$ is the best rank $k$ approximation to $\X$.
\end{thm}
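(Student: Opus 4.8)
The plan is to follow the leverage-score sampling argument behind~\cite{CUR_PNAS}. First I would make the algorithm explicit, which already shows it exists: compute (the top-$k$ part of) the SVD $\X=\U\bSigma\V^T$, form $\V_k$, and define the \emph{column leverage scores} $\ell_j=\|(\V_k)_{(j)}\|_2^2$ for $j=1,\dots,p$, which satisfy $\sum_j\ell_j=k$. The algorithm draws $c$ indices i.i.d.\ with replacement from the distribution $p_j=\ell_j/k$ on $\{1,\dots,p\}$ and lets $\I$ be the resulting (multi)set; equivalently it forms the $p\times c$ sampling-and-rescaling matrix $S$ whose $t$-th column is $e_{i_t}/\sqrt{c\,p_{i_t}}$, so that $\X S=\X^{\I}D$ for a diagonal rescaling $D$ and $\Lcol(\X S)=\Lcol(\X^{\I})$.

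Next I would do the algebraic reduction. Since $\X^{\I}\X^{\I+}\X$ is the orthogonal projection of $\X$ onto $\Lcol(\X^{\I})$, we have $\|\X-\X^{\I}\X^{\I+}\X\|_F=\min_{\B}\|\X-\X^{\I}\B\|_F\le\|\X-\X S(\V_k^TS)^+\V_k^T\|_F$, because $\X S(\V_k^TS)^+\V_k^T=\X^{\I}\bigl(D(\V_k^TS)^+\V_k^T\bigr)$ lies in $\Lcol(\X^{\I})$; so it suffices to bound this particular reconstruction. Using $\X_k=\X\V_k\V_k^T$ together with the identity $(\V_k^TS)(\V_k^TS)^+=I_k$ valid whenever $\V_k^TS$ has full row rank, and writing $\X_\perp:=\X-\X_k$ (so that $\X_\perp\V_k=0$), one computes $\X S(\V_k^TS)^+\V_k^T=\X_k+\X_\perp S(\V_k^TS)^+\V_k^T$, hence by the triangle inequality and the fact that right-multiplication by the orthonormal-rowed $\V_k^T$ preserves Frobenius norm,
\[
\|\X-\X S(\V_k^TS)^+\V_k^T\|_F\le\|\X-\X_k\|_F+\|\X_\perp S(\V_k^TS)^+\|_F .
\]
Everything then reduces to showing the cross term $\|\X_\perp S(\V_k^TS)^+\|_F$ is at most $\epsilon\|\X-\X_k\|_F$ with probability $1-\delta$.

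The cross-term bound needs two probabilistic ingredients. (i) \textbf{Subspace preservation:} write $\V_k^TSS^T\V_k=\tfrac1c\sum_{t=1}^c\tfrac{1}{p_{i_t}}(\V_k)_{(i_t)}^T(\V_k)_{(i_t)}$, a sum of independent PSD rank-one terms with mean $\tfrac1cI_k$ and operator norm at most $k/c$ each; a matrix-Chernoff / coupon-collector argument shows that for $c=\Omega(k\log(k/\delta))$ one has $\|\V_k^TSS^T\V_k-I_k\|_2\le\tfrac12$, so $\V_k^TS$ has full row rank and $\|(\V_k^TS)^+\|_2=\sigma_{\min}(\V_k^TS)^{-1}\le\sqrt2$. (ii) \textbf{Approximate matrix multiplication:} on this event $(\V_k^TS)^+=S^T\V_k(\V_k^TSS^T\V_k)^{-1}$, so $\|\X_\perp S(\V_k^TS)^+\|_F\le\|\X_\perp SS^T\V_k\|_F\,\|(\V_k^TSS^T\V_k)^{-1}\|_2\le 2\,\|\X_\perp SS^T\V_k-\X_\perp\V_k\|_F$, using $\X_\perp\V_k=0$. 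The standard second-moment bound for randomized matrix multiplication under the probabilities $p_j=\ell_j/k$ gives $\mathbb{E}\|\X_\perp SS^T\V_k-\X_\perp\V_k\|_F^2\le\tfrac1c\sum_j\tfrac{\|(\X_\perp)^{(j)}\|_2^2\,\ell_j}{p_j}=\tfrac{k}{c}\|\X_\perp\|_F^2$, so for $c=\Omega(k/\epsilon^2)$ Markov's inequality makes this $\le\epsilon^2\|\X-\X_k\|_F^2$ with constant probability. Intersecting the two events, and, to obtain probability $1-\delta$ with the stated product form of $c$, repeating the sampling $O(\log(1/\delta))$ times and keeping the run of smallest reconstruction error, yields \eqref{eq:err} after rescaling $\epsilon$ by a constant.

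The main obstacle is ingredient (i): lower-bounding $\sigma_{\min}(\V_k^TS)$, equivalently controlling $\|\V_k^TSS^T\V_k-I_k\|_2$. This is what forces the specific leverage-score distribution — it is precisely the choice making each rank-one term's operator norm uniformly $\le k/c$ — and it is what forces the $\log k$ oversampling; it genuinely requires a matrix concentration inequality rather than a scalar argument. A secondary subtlety, already visible in the second paragraph, is that the cross term must be bounded \emph{relative to} $\|\X-\X_k\|_F$ rather than in absolute terms; this is possible only because $\X_\perp\V_k=0$, i.e.\ the residual of the best rank-$k$ approximation is orthogonal to the top-$k$ right singular subspace, which is exactly why the guarantee comes out as a relative-error bound.
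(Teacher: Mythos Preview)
The paper does not prove Theorem~\ref{thm:CUR}; it is quoted as a known result from~\cite{CUR_PNAS,DMM08_CURtheory_JRNL}, and the paper only describes the underlying algorithm (compute leverage scores, sample columns, return them). So there is no ``paper's own proof'' to compare against.

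That said, your sketch is a faithful and essentially correct outline of the standard proof of this result as it appears in the cited literature. The reduction to bounding $\|\X_\perp S(\V_k^TS)^+\|_F$, the two ingredients (matrix-Chernoff control of $\|\V_k^TSS^T\V_k-I_k\|_2$ to get $\sigma_{\min}(\V_k^TS)\ge 1/\sqrt2$, and the approximate-matrix-multiplication second-moment bound on $\|\X_\perp SS^T\V_k\|_F$ exploiting $\X_\perp\V_k=0$), and the boosting-by-repetition to get the $\log(1/\delta)$ factor are exactly the steps in Drineas--Mahoney--Muthukrishnan. One small remark: in your second paragraph you could have used Pythagoras rather than the triangle inequality, since the row space of $\X_\perp S(\V_k^TS)^+\V_k^T$ lies in $\Lcol(\V_k)$ while the row space of $\X_\perp$ is orthogonal to it; this gives $\|\X-\X S(\V_k^TS)^+\V_k^T\|_F^2=\|\X_\perp\|_F^2+\|\X_\perp S(\V_k^TS)^+\|_F^2$ directly and tightens constants, but the triangle inequality suffices for the $(1+\epsilon)$ statement.
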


\noindent
The algorithm referred to by Theorem~\ref{thm:CUR} is very simple:
\begin{compactenum}[1)]
\item
Compute the \emph{normalized statistical leverage scores}, defined below in
(\ref{eqn:col_probs}).
\item
Form $\I$ by randomly sampling  $c$
columns of $\X$, using these normalized statistical leverage scores as an importance sampling distribution.
\item
Return the $n \times c$ matrix $\X^\I$ consisting of these selected
columns.
\end{compactenum}
The key issue here is the choice of
the importance sampling distribution.
Let the $p \times k$ matrix $\V_k$ be
the top-$k$ right singular vectors of $\X$.
Then the \emph{normalized statistical leverage scores} are
\begin{equation}
\label{eqn:col_probs}
\pi_i 
      = \frac{1}{k} ||\V_{k(i)}||_2^2,
\end{equation}
for all $i=1,\ldots,p$, where $\V_{k(i)}$ denotes the $i$-th row of $\V_k$.
These scores, proportional to the Euclidean norms of the \emph{rows} of
the top-$k$ right singular vectors,
define the relevant nonuniformity structure to be used to identify
good (in the sense of Theorem~\ref{thm:CUR}) columns.
In addition, these scores are proportional to the diagonal elements of the
projection matrix onto the top-$k$ right singular subspace.
Thus, they generalize the so-called hat matrix~\cite{HW78}, and they
have a natural interpretation as capturing the ``statistical leverage'' or
``influence'' of a given column on the best low-rank fit of the data 
matrix~\cite{HW78,CUR_PNAS}.

\vspace{-2mm}
\subsection{Regularized sparse PCA methods}
\label{sec:regul-sparse-pca}
\vspace{-1mm}

SPCA methods attempt to make PCA easier to interpret
for domain experts by finding sparse approximations to the \emph{columns} of
$\V$.\footnote{For SPCA, we only consider sparsity in the right singular vectors $\V$ and not in the left singular vectors $\U$.  This is similar to considering only the choice of columns and not of both columns and rows in CUR.}
There are several variants of SPCA.
For example, Jolliffe \emph{et al.}~\cite{Jolliffe} and Witten
\emph{et al.}~\cite{Witten} use the maximum variance interpretation of PCA
and provide an optimization problem which explicitly encourages
sparsity in $\V$ based on a Lasso constraint~\cite{Tib96}.
d'Aspremont \emph{et al.}~\cite{AGJL07} take a similar approach, but
instead formulate the problem as an SDP.

Zou \emph{et al.}~\cite{ZHT06} use the minimum reconstruction error
interpretation of PCA to suggest a different approach to the SPCA problem;
this formulation will be most relevant to our present purpose.
They begin by formulating
PCA as the solution to a regression-type problem.

\begin{thm}[Zou \emph{et al.}~\cite{ZHT06}]
\label{thm:PCA-reg}
Given an arbitrary matrix $\X\in\r^{n\times p}$ and an integer $k$,
let $\A$ and $\W$ be $p \times k$  matrices.
Then, for any $\lambda > 0$, let
\begin{align}
\label{eq:Zou1}
(\A^*,\V^*_k)=\argmin_{\A,\W\in \r^{p\times k}}||\X-\X\W\A^T||_F^2 + \lambda||\W||_F^2 \quad\st\A^T\A=\II_k .
\end{align}
Then, the minimizing matrices $\A^*$ and $\V^*_k$ satisfy
$\A^{*(i)} = s_i \V^{(i)}$ and $\V^{*(i)}_k = s_i \frac{\bSigma_{ii}^2}{\bSigma_{ii}^2+\lambda}\V^{(i)}$,
where $s_i=1$ or $-1$.
\end{thm}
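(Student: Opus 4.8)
The plan is to solve the optimization in \eqref{eq:Zou1} by first minimizing over $\W$ with $\A$ held fixed (an unconstrained ridge-regression-type problem), and then minimizing the resulting expression over $\A$ subject to the orthonormality constraint $\A^T\A = \II_k$. Throughout I would write $\X = \U\bSigma\V^T$ for the SVD and try to reduce everything to statements about $\bSigma$ and the singular vectors.

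First I would fix $\A$ and note that $\W \mapsto \|\X - \X\W\A^T\|_F^2 + \lambda\|\W\|_F^2$ is a strictly convex quadratic in $\W$ (here $\lambda>0$ is essential). Setting the gradient to zero gives the normal equations $(\X^T\X + \lambda\II)\W = \X^T\X\A$, so the optimal $\W$ is $\W(\A) = (\X^T\X + \lambda\II_p)^{-1}\X^T\X\,\A$. Substituting this back, and using $\A^T\A = \II_k$ together with the identity $\|\X - \X\W\A^T\|_F^2 = \|\X\|_F^2 - 2\,\mathrm{tr}(\A^T\X^T\X\W) + \mathrm{tr}(\W^T\X^T\X\W)$, the objective collapses to an expression of the form $\|\X\|_F^2 - \mathrm{tr}\!\left(\A^T \X^T\X(\X^T\X+\lambda\II_p)^{-1}\X^T\X\,\A\right)$. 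Hence minimizing the original objective over $(\A,\W)$ is equivalent to \emph{maximizing} $\mathrm{tr}(\A^T \M \A)$ over $\A^T\A=\II_k$, where $\M = \X^T\X(\X^T\X+\lambda\II_p)^{-1}\X^T\X$. Diagonalizing, $\M = \V\,\mathrm{diag}\!\big(\bSigma_{ii}^4/(\bSigma_{ii}^2+\lambda)\big)\,\V^T$, which is a fixed PSD matrix whose eigenvectors are the columns of $\V$ and whose eigenvalues are a strictly increasing function of $\bSigma_{ii}^2$.

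Next I would invoke the standard Ky Fan / Rayleigh-trace characterization: the maximum of $\mathrm{tr}(\A^T\M\A)$ over $p\times k$ matrices with orthonormal columns is the sum of the $k$ largest eigenvalues of $\M$, attained exactly when $\Lcol(\A)$ equals the span of the corresponding top-$k$ eigenvectors, i.e. $\Lcol(\A^*) = \Lcol(\V_k)$. Since the eigenvalues of $\M$ are ordered the same way as $\bSigma_{ii}^2$, the maximizers are the matrices $\A$ whose columns are $\pm$ the columns of $\V_k$ (assuming the relevant singular values are distinct so the top-$k$ eigenspace is a coordinate subspace in the $\V$-basis); this yields $\A^{*(i)} = s_i\V^{(i)}$ with $s_i\in\{\pm1\}$. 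Plugging $\A^{*(i)} = s_i\V^{(i)}$ into $\W(\A) = (\X^T\X+\lambda\II_p)^{-1}\X^T\X\,\A$ and using $\X^T\X\,\V^{(i)} = \bSigma_{ii}^2\V^{(i)}$ gives $\V^{*(i)}_k = s_i\,\frac{\bSigma_{ii}^2}{\bSigma_{ii}^2+\lambda}\V^{(i)}$, as claimed.

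The main obstacle I anticipate is the rigorous handling of the joint minimization over $(\A,\W)$ — in particular, justifying that the profiled-out optimum is attained and that the $\argmin$ is characterized precisely (rather than merely bounded) by the Ky Fan argument, including the nonuniqueness captured by the signs $s_i$ and the tie-breaking issue when singular values coincide. A secondary bookkeeping point is keeping the trace algebra and the $\A^T\A=\II_k$ substitutions correct when expanding $\|\X - \X\W\A^T\|_F^2$; this is routine but is where sign or transpose errors would creep in. Once the reduction to $\max_{\A^T\A=\II_k}\mathrm{tr}(\A^T\M\A)$ is in hand, the rest is a direct consequence of the spectral theorem and back-substitution.
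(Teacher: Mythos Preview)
The paper does not actually supply a proof of this theorem; it is quoted from Zou \emph{et al.}~\cite{ZHT06} and stated without argument. So there is no ``paper's own proof'' to compare against here. Your approach --- profile out $\W$ via the ridge normal equations $\W(\A)=(\X^T\X+\lambda\II_p)^{-1}\X^T\X\A$, reduce to $\max_{\A^T\A=\II_k}\mathrm{tr}(\A^T\M\A)$, and invoke the Ky Fan trace characterization --- is the standard and correct route, and it is essentially the argument in the original Zou \emph{et al.}\ paper. One small point worth tightening: the Ky Fan maximizers are all $\A=\V_kQ$ for $Q$ orthogonal, not only $\V_k$ times a sign matrix; the theorem as stated is thus describing \emph{a} minimizer rather than \emph{the} minimizer, and you have correctly flagged this nonuniqueness as the main loose end.
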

\noindent
That is, up to signs, $\A^*$ consists of the top-$k$ right singular vectors
of $\X$, and $\V^*_k$  consists of those same vectors ``shrunk'' by a factor
depending on the corresponding singular value.
Given this regression-type characterization of PCA, Zou
\emph{et al.}~\cite{ZHT06} then ``sparsify'' the formulation by adding an
$L_1$ penalty on~$\W$:
\begin{align}
\label{eq:Zou2}
(\A^*,\V^*_k)=\argmin_{\A,\W\in \r^{p\times k}}||\X-\X\W\A^T||_F^2 + \lambda||\W||_F^2+\lambda_1||\W||_1 \quad\st\A^T\A=\II_k ,
\end{align}
where  $||\W||_1 = \sum_{ij}|\W_{ij}|$.
This regularization tends to sparsify $\W$ element-wise, so that the
solution $\V^*_k$ gives a sparse approximation of $\V_k$.

\vspace{-2mm}
\section{Expressing CUR as an optimization problem}
\label{sec:cur-optim-fram}
\vspace{-1mm}

In this section, we present an optimization formulation of CUR.
Recall, from Section~\ref{sec:cur}, that CUR takes a purely algorithmic 
approach to the problem of approximating a matrix in terms of a small 
number of its columns.  
That is, it achieves sparsity indirectly by randomly selecting $c$ columns, 
and it does so in such a way that the reconstruction error is small with 
high probability (Theorem \ref{thm:CUR}).  
By contrast, SPCA methods are generally formulated as the exact solution to 
an optimization problem.  

From Theorem \ref{thm:CUR}, it is clear that CUR seeks a
subset $\I$ of size $c$ for which
$\min_{\B\in\r^{c\times p}}||\X-\X^\I\B||_F$
is small.  In this sense, CUR can be viewed as a
randomized algorithm for approximately solving the following combinatorial optimization problem:
\begin{align}\label{eq:allsubsets}
\min_{\I\subset\{1,\ldots,p\}}\min_{\B\in\r^{c\times p}}||\X-\X^\I\B||_F\quad
  \st|\I|\le c.
\end{align}
In words, this objective asks for the subset of $c$ columns of $\X$ which best
describes the entire matrix $\X$.  Notice that relaxing  $|\I|=c$ to
$|\I|\le c$ does  not affect the optimum.  This optimization problem is analogous
to all-subsets multivariate regression \cite{hast-tibs-fried}, which is known to be NP-hard.

However, by using ideas from the optimization literature we can 
approximate this combinatorial problem as a regularized
regression problem that is convex.  First, notice that
\eqref{eq:allsubsets} is
equivalent to
\begin{align}\label{eq:allsubsets2}
  \min_{\B\in\r^{p\times p}}||\X-\X\B||_F
  \quad\st\sum_{i=1}^p\one_{\{||\B_{(i)}||_2\neq 0\}}\le c ,
\end{align}
where we now optimize over a $p\times p$ matrix $\B$. To see the
equivalence between \eqref{eq:allsubsets} and \eqref{eq:allsubsets2},
note that the constraint in \eqref{eq:allsubsets2} is the same as
finding some subset $\I$ with $|\I|\le c$ such that $\B_{\I^c}=\0$.

The formulation in \eqref{eq:allsubsets2} provides a natural entry point to
proposing a convex optimization approach corresponding to CUR. First notice
that \eqref{eq:allsubsets2} uses an $L_0$ norm on the rows
of $\B$, which is not convex. However, we can approximate the $L_0$ constraint
by a \textit{group lasso} penalty, which uses a well-known convex
heuristic proposed
by Yuan \emph{et al.}~\cite{Grouplasso} that encourages prespecified \textit{groups} of parameters to be
simultaneously sparse.
Thus, the combinatorial problem in
\eqref{eq:allsubsets} can be approximated by the following
convex (and thus tractable) problem:
\begin{prob}[\bf Group lasso regression: \glreg]
\label{prob:glreg}
Given an arbitrary matrix $\X\in\r^{n\times p}$,
let $\B\in\r^{p\times p}$ and $t>0$.
The \glreg~ problem is to solve
\begin{align}\label{eq:regr_gplasso}
  \B^* = \argmin_{\B}||\X-\X\B||_F
  \quad\st\sum_{i=1}^p||\B_{(i)}||_2\le t,
\end{align}
where $t$ is chosen to get $c$ nonzero rows in $\B^*$.
\end{prob}
Since the rows of $\B$ are grouped together in the penalty
$\sum_{i=1}^p||\B_{(i)}||_2$, the row vector $\B_{(i)}$ will
tend to be either dense or entirely zero.
Note also that the algorithm to solve Problem~\ref{prob:glreg} is a special 
case of Algorithm~\ref{alg:groupLasso} (see below), which solves the 
\glpca~ problem, to be introduced later.
(Finally, as a side remark, note that our proposed \glreg~ is strikingly 
similar to a recently proposed method for sparse inverse covariance 
estimation~\cite{Friedman10,Peng09}.)

\vspace{-2mm}
\section{Distinguishing CUR from SPCA}
\label{sec:connections-pca}
\vspace{-1mm}

Our original intention in casting CUR in the optimization framework was to 
understand better whether CUR could be seen as an SPCA-type method.  
So far, we have established CUR's connection to regression by showing that 
CUR can be thought of as an approximation algorithm for the sparse 
regression problem~\eqref{eq:allsubsets2}.  
In this section, we discuss the relationship between regression and PCA, and
we show that CUR cannot be directly cast as an SPCA method.

To do this, recall that 
regression, in particular ``self'' regression, finds a $\B\in\r^{p\times
p}$ that minimizes
\begin{align}\label{eq:reconCUR}
  ||\X-\X\B||_F.
\end{align}
On the other hand, PCA-type methods find a set of directions $\W$ that 
minimize
\begin{align}\label{eq:PCArecon}
  \Err(\W):=||\X-\X\W\W^+||_F.
\end{align}
Here, unlike in \eqref{eq:PCA}, we do not assume that $\W$ is
orthogonal, since the minimizer produced from SPCA methods is often not
required to be orthogonal (recall Section \ref{sec:regul-sparse-pca}).

Clearly, with no constraints on $\B$ or $\W$, we can trivially achieve
zero reconstruction error in both cases by taking $\B=\II_p$ and $\W$ any $p\times p$
full-rank matrix.  However, with additional constraints, these two
problems can be very different. It is common to consider
sparsity and/or rank constraints. We have seen in Section
\ref{sec:cur-optim-fram} that CUR effectively requires $\B$ to
be row-sparse; in the standard PCA setting, $\W$ is taken to be
rank $k$ (with $k<p$), in which case \eqref{eq:PCArecon} is
minimized by $\V_k$ and obtains the optimal value
$\Err(\V_k)=||\X-\X_k||_F$; finally, for SPCA, $\W$ is further required to be
sparse.

To illustrate the difference between the reconstruction errors
\eqref{eq:reconCUR} and \eqref{eq:PCArecon} when extra constraints are
imposed, consider the 2-dimensional toy example in Figure~\ref{fig:recon}.
In this example,  
we compare regression with a row-sparsity
constraint to PCA with both rank and sparsity constraints.  With $\X\in\r^{n\times 2}$, we plot $\X^{(2)}$ against $\X^{(1)}$ as the solid points in both
plots of Figure~\ref{fig:recon}. Constraining $\B_{(2)}=0$ (giving
row-sparsity, as with CUR methods),
\eqref{eq:reconCUR} becomes
$\min_{B_{12}}||\X^{(2)}-\X^{(1)}B_{12}||_2$, which is a simple linear
regression, represented by the black thick line and minimizing the sum
of squared vertical errors as shown. The red line (left plot) shows the
first principal component direction, which minimizes $\Err(\W)$ among all rank-one matrices
$\W$.  Here, $\Err(\W)$ is the sum of squared projection distances (red
dotted lines). Finally, if $\W$ is further required to
be sparse in the $\X^{(2)}$ direction (as with SPCA methods), we get
the rank-one, sparse projection represented by the green line in
Figure~\ref{fig:recon} (right). The two sets of dotted lines in each
plot clearly differ, indicating that their corresponding
reconstruction errors are different as well.
Since we have shown that CUR is minimizing a regression-based
objective, this toy example suggests that CUR may not in fact be
optimizing a PCA-type objective such as \eqref{eq:PCArecon}.
Next, we will make this intuition more precise.

\begin{figure}[!t]
  \centering
  \psfrag{Regression}[c][c]{\tiny Regression}
  \psfrag{PCA}[c][c]{\tiny PCA}
  \psfrag{SPCA}[c][c]{\tiny SPCA}
  \psfrag{X1}[c][c]{\scriptsize $\X^{(1)}$}
  \psfrag{X2}[c][c]{\scriptsize $\X^{(2)}$}
  \psfrag{ErrReg}[c][c]{\scriptsize ~~error~\eqref{eq:reconCUR}}
  \psfrag{ErrPCA}[c][c]{\scriptsize ~~error~\eqref{eq:PCArecon}}
  \includegraphics*[width=0.45\textwidth, viewport=18 50 480 380, clip]{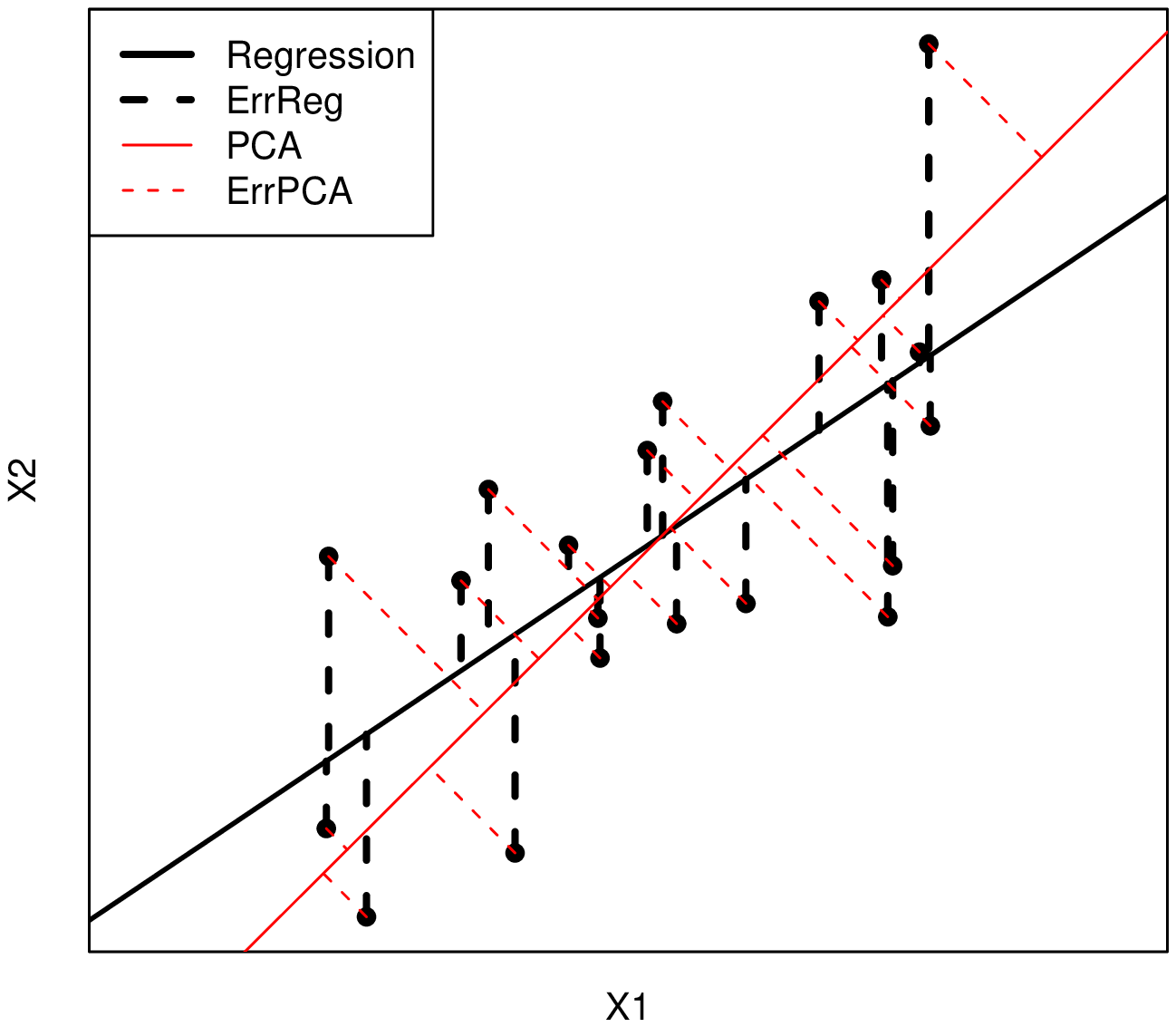}
  \includegraphics*[width=0.45\textwidth, viewport=18 50 480 380,
  clip]{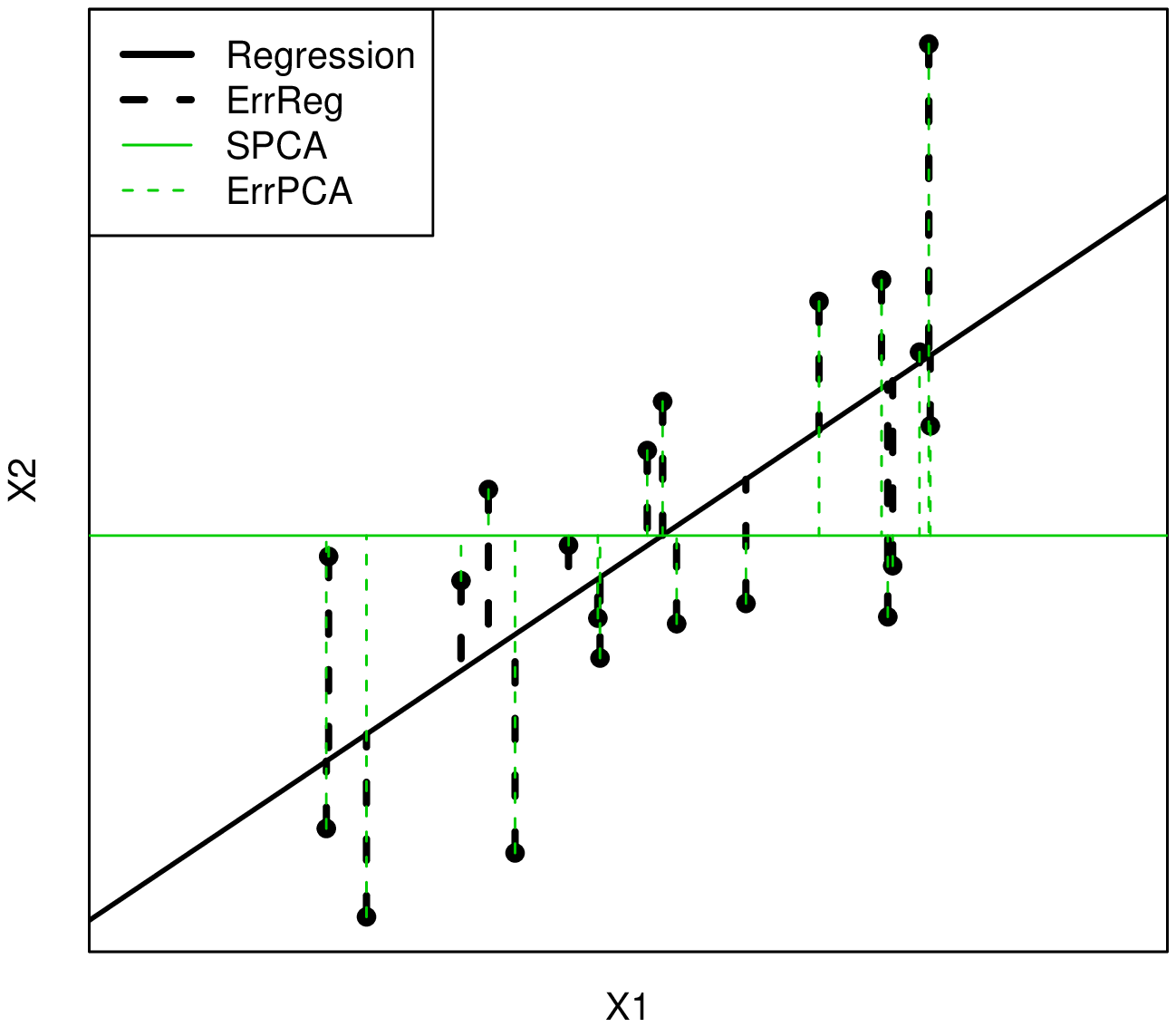}
\caption{ Example of the difference in reconstruction errors
    \eqref{eq:reconCUR} and \eqref{eq:PCArecon}, when additional constraints
imposed.
Left: regression with row-sparsity constraint (black)
compared with PCA with low rank constraint (red). Right: regression
with row-sparsity constraint (black)
compared with PCA with low rank and sparsity constraint (green). In both
plots, the corresponding errors are represented by the
dotted lines.}
  \label{fig:recon}
\end{figure}

The first step to showing that CUR
is an SPCA method would be to produce a matrix $\Vcur$ for which
$\X^{\I}\X^{\I+}\X=\X\Vcur\Vcur^{+}$, \emph{i.e.} to express CUR's
approximation in the form of an SPCA approximation.  However, this equality implies
$\Lcol(\X\Vcur\Vcur^{+})\subseteq \Lcol(\X^{\I})$, meaning that $(\V_{\textsc{cur}})_{\I^c}=\0$.
If such a $\Vcur$ existed, then clearly $\Err(\Vcur) =
||\X-\X^{\I}\X^{\I+}\X||_F$, and so CUR could be regarded as
implicitly performing sparse PCA in the sense that (a) $\Vcur$ is
sparse; and (b) by Theorem \ref{thm:CUR} (with high
probability), $\Err(\Vcur)\le (1+\epsilon)\Err(\V_k)$.  Thus, the existence of such a $\Vcur$ would cast CUR
directly as a randomized approximation algorithm for SPCA.
However, the following theorem states that unless an unrealistic
constraint on $\X$ holds, there does not exist a matrix $\Vcur$ for
which $\Err(\Vcur) =||\X-\X^{\I}\X^{\I+}\X||_F$.
The larger implication of this theorem is that CUR cannot be directly
viewed as an SPCA-type method.
\begin{thm}
  Let $\I\subset\{1,\ldots,p\}$ be an index set and suppose
  $\W\in\r^{p\times p}$ satisfies $\W_{\I^c}=\0$. Then,
  $$||\X-\X\W\W^+||_F > ||\X-\X^\I\X^{\I+}\X||_F,$$
  unless $\mathcal L_{\mathrm{col}}(\X^\I)\perp\mathcal
  L_{\mathrm{col}}(\X^{\I^c})$, in which case ``$\ge$'' holds.
\label{thm:cur-not-pca}
\end{thm}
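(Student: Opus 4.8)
The plan is to reduce the whole statement to a single Pythagorean decomposition of $\X-\X\W\W^+$ in the Frobenius inner product, so that the inequality and its equality case fall out together. First I would extract the one structural fact the hypothesis $\W_{\I^c}=\0$ gives us: every column of $\W$ is supported on the coordinates in $\I$, so $\Lcol(\W)\subseteq S_\I:=\mathrm{span}\{e_i:i\in\I\}\subseteq\r^p$, where $e_i$ denotes the $i$-th standard basis vector. Since $\W\W^+$ is the orthogonal projection of $\r^p$ onto $\Lcol(\W)$, the vector $\W\W^+e_j$ lies in $S_\I$ for every $j$, hence the $j$-th column of $\X\W\W^+$, namely $\X(\W\W^+e_j)$, lies in $\{\X v:v\in S_\I\}=\Lcol(\X^\I)$ (using $\X e_i=\X^{(i)}$). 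In short, every column of $\X\W\W^+$ already lives in $\Lcol(\X^\I)$.

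Next let $P$ be the orthogonal projection of $\r^n$ onto $\Lcol(\X^\I)$, so that $P\X=\X^\I\X^{\I+}\X$. I would write $\X-\X\W\W^+=(\II_n-P)\X+(P\X-\X\W\W^+)$ and observe that the columns of $(\II_n-P)\X$ lie in $\Lcol(\X^\I)^\perp$, while those of $P\X-\X\W\W^+$ lie in $\Lcol(\X^\I)$ (for the latter, combine the previous paragraph with $P\,\X\W\W^+=\X\W\W^+$). Because the Frobenius inner product of two matrices is the sum of the Euclidean inner products of corresponding columns, the cross term vanishes column by column, and Pythagoras gives
\[
||\X-\X\W\W^+||_F^2=||\X-\X^\I\X^{\I+}\X||_F^2+||P\X-\X\W\W^+||_F^2 .
\]
This already yields ``$\ge$'' unconditionally, and shows the inequality is strict unless $\X\W\W^+=\X^\I\X^{\I+}\X$.

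The remaining step — the only one with real content — is to show that the equality case $\X\W\W^+=\X^\I\X^{\I+}\X$ forces $\Lcol(\X^\I)\perp\Lcol(\X^{\I^c})$. The key observation is that for $j\in\I^c$ the basis vector $e_j$ is orthogonal to every $e_i$ with $i\in\I$, hence to $S_\I\supseteq\Lcol(\W)$, so $\W\W^+e_j=\0$ and the $j$-th column of $\X\W\W^+$ is $\0$. If $\X\W\W^+=\X^\I\X^{\I+}\X$, then the $j$-th column of the right-hand side, $P\X^{(j)}$, is also $\0$, i.e.\ $\X^{(j)}\in\Lcol(\X^\I)^\perp$; running over all $j\in\I^c$ gives $\Lcol(\X^{\I^c})\perp\Lcol(\X^\I)$. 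Contrapositively, if that orthogonality fails the equality case is impossible and the inequality is strict, while if it holds the displayed identity already delivers ``$\ge$'' (and this cannot be improved to ``$>$'' in general: taking $\W$ to be the diagonal matrix with ones in the positions of $\I$ satisfies $\W_{\I^c}=\0$ and attains $\X\W\W^+=\X^\I\X^{\I+}\X$ exactly when $\Lcol(\X^\I)\perp\Lcol(\X^{\I^c})$). The place where care is needed is the bookkeeping that makes the Frobenius cross term vanish, but that reduces precisely to the column-space observation of the first paragraph, so no serious obstacle is expected.
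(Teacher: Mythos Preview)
Your proof is correct. The route differs from the paper's in organization: the paper first splits $\X-\X\W\W^+$ by column index into the $\I$-block and the $\I^c$-block (using $\X\W=\X^\I\W_\I$ and that $\W\W^+$ annihilates $e_j$ for $j\in\I^c$), obtaining $||\X^\I-\X^\I\W_\I\W_\I^+||_F^2+||\X^{\I^c}||_F^2$; it then drops the first summand, applies Pythagoras in $\r^n$ to each column of $\X^{\I^c}$ with respect to $P=\X^\I\X^{\I+}$, and drops the resulting $||P\X^{\I^c}||_F^2$. So the paper makes \emph{two} inequalities, and its strictness analysis tracks only the second (which is exactly $\Lcol(\X^\I)\perp\Lcol(\X^{\I^c})$). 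You instead perform a \emph{single} orthogonal split of every column of $\X-\X\W\W^+$ relative to $\Lcol(\X^\I)$, yielding directly $||\X-\X\W\W^+||_F^2=||\X-P\X||_F^2+||P\X-\X\W\W^+||_F^2$; your one dropped term is precisely the sum of the paper's two. This buys you a cleaner equality-case analysis and, notably, lets you avoid the explicit formula $\W^+=(\W_\I^T\W_\I)^{-1}\W^T$ the paper invokes, which tacitly assumes $\W$ has full column rank (generally false here since $\W\in\r^{p\times p}$ has at most $|\I|$ nonzero rows). Your added remark that the diagonal $\W$ supported on $\I$ attains equality exactly in the orthogonal case is a nice sharpness observation the paper does not make.
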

\begin{proof}
  \begin{align*}
    ||\X-\X\W\W^+||_F^2&=||\X-\X^{\I}\W_{\I}\W^+||_F^2 =||\X-\X^{\I}\W_{\I}(\W_\I^T\W_\I)^{-1}\W^T||_F^2\\
    &=||\X^{\I}-\X^{\I}\W_{\I}\W_{\I}^+||_F^2+||\X^{\I^c}||_F^2
    \ge||\X^{\I^c}||_F^2\\
    &=||\X^{\I^c}-\X^\I\X^{\I+}\X^{\I^c}||_F^2+||\X^\I\X^{\I+}\X^{\I^c}||_F^2\\
    &=||\X-\X^\I\X^{\I+}\X||_F^2 + ||\X^\I\X^{\I+}\X^{\I^c}||_F^2
    \ge||\X-\X^\I\X^{\I+}\X||_F^2.
  \end{align*}
  The last inequality is strict unless $\X^\I\X^{\I+}\X^{\I^c}=\0$.
\end{proof}

\vspace{-5mm}
\section{CUR-type sparsity and the group lasso SPCA}
\label{sec:group-lasso-pca}
\vspace{-2mm}

Although CUR cannot be directly cast as an SPCA-type method, in this 
section we propose a sparse PCA approach (which we call the group lasso 
SPCA or \glpca) that accomplishes something very close to CUR. 
Our proposal produces a $\V^*$ that has rows that are entirely zero, and it 
is motivated by the following two observations about CUR.
First, following from the definition of the leverage scores
\eqref{eqn:col_probs}, CUR chooses columns of $\X$ based on the norm of their
corresponding rows of $\V_k$. 
Thus, it essentially ``zeros-out'' the rows of $\V_k$ with small norms (in 
a probabilistic sense). 
Second, as we have noted in Section~\ref{sec:connections-pca}, if CUR could 
be expressed as a PCA method, its principal directions matrix ``$\Vcur$'' 
would have $p-c$ rows that are entirely zero, corresponding to removing 
those columns of $\X$.

Recall that Zou \emph{et al.}~\cite{ZHT06} obtain a sparse  $\V^*$ by
including in~(\ref{eq:Zou2}) an additional $L_1$ penalty from the
optimization problem~\eqref{eq:Zou1}. Since the $L_1$ penalty is on
the entire matrix viewed as a vector, it encourages only unstructured sparsity.
To achieve the CUR-type row sparsity, we propose
the following modification of~\eqref{eq:Zou1}:
\begin{prob}[\bf Group lasso SPCA: \glpca]
\label{prob:cur-spca}
Given an arbitrary matrix $\X\in\r^{n\times p}$ and an integer $k$,
let $\A$ and $\W$ be $p \times k$  matrices, and let $\lambda,\lambda_1>0$.
The \glpca~ problem is to solve
\begin{align}
\label{eq:groupLasso}
(\A^*,\V^*) =
  \argmin_{\A,\W}||\X-\X\W\A^T||_F^2 + \lambda||\W||_F^2 +
  \lambda_1\sum_{i=1}^p||\W_{(i)}||_2 ~\st\A^T\A=\II_k.
\end{align}
\end{prob}
Thus, the lasso penalty $\lambda_1||\W||_1$ in~\eqref{eq:Zou2} is replaced
in~\eqref{eq:groupLasso} by a group lasso penalty
$\lambda_1\sum_{i=1}^p||\W_{(i)}||_2$, where rows of $\W$ are
grouped together so that each row of $\V^*$ will tend to be either
dense or entirely zero.

Importantly, the \glpca~ problem is not convex in $\W$ and $\A$ together; 
it is, however, convex in $\W$, and it is easy to solve in $\A$.
Thus, analogous to the treatment in Zou \emph{et al.}~\cite{ZHT06}, we 
propose an iterative alternate-minimization algorithm to solve \glpca.
This is described in Algorithm~\ref{alg:groupLasso}; 
and the justification of this algorithm is given in Section
\ref{sec:deriv-alg}. Note that if we fix $\A$ to be $\II$ throughout, then
Algorithm~\ref{alg:groupLasso} can be used to solve the \glreg~ problem 
discussed in Section~\ref{sec:cur-optim-fram}.

We remark that such row-sparsity in $\V^*$
can have either advantages or disadvantages.
Consider, for example, when there are a small
number of informative columns in $\X$ and the rest are not
important for the task at hand~\cite{CUR_PNAS,Paschou07b}.
In such a case, we would expect that enforcing entire rows to be zero would lead to better identification of the signal columns;
and this has been empirically observed in the application
of CUR to DNA SNP analysis~\cite{Paschou07b}.
The unstructured $\V^*$, by contrast, would not be able to ``borrow
strength'' across all columns of $\V^*$ to differentiate the signal columns from
the noise columns.
On the other hand, requiring such structured sparsity is more
restrictive and may not be desirable.
For example, in microarray analysis in which we have measured $p$ genes on
$n$ patients, our goal may be to find several underlying factors.
Biologists have identified ``pathways'' of interconnected genes~\cite{Subramanian05}, and it
would be desirable if each sparse factor could be identified with a
different pathway (that is, a different set of genes).
Requiring all factors of $\V^*$ to exclude the same $p-c$ genes does
not allow a different sparse subset of genes to be active in each factor.

We finish this section by pointing out that
while most SPCA methods only enforce unstructured zeros in $\V^*$, the
idea of having a structured sparsity in the
PCA context has very recently been explored~\cite{JOB09_TR}.
Our \glpca~ problem falls within the broad framework of this
idea.

\begin{algorithm}[t]
\label{alg:groupLasso}
\caption{Iterative algorithm for solving the \glpca~ (and \glreg)~problems.\\
$\text{\hspace{20mm}}$(For the \glreg~ problem, fix $\A=\II$ throughout this algorithm.)}

\KwIn{Data matrix $\X$ and initial estimates for $\A$ and $\W$}
\KwOut{Final estimates for $\A$ and $\W$}
\Repeat{convergence}{
\nl Compute SVD of $\X^T\X\W$ as $\U\D\V^T$ and then $\A\leftarrow \U\V^T$\; \label{step:AgivenB}
    $\mathcal{S}\leftarrow \{i:\; ||\W_{(i)}||_2\neq 0 \}$\;

 \For{$i\in\mathcal{S}$}{
\nl      Compute $\bb_i=\sum_{j\neq i}\left(\X^{(j)T}\X^{(i)}\right)\W_{(j)}^T$\;

         \eIf{     $||\A^T\X^T\X^{(i)}-\bb_i||_2\leq\lambda_1/2$ }
             { \nl $\W_{(i)}^T\leftarrow \0$\;\label{step:if} }
             { \nl $\W_{(i)}^T\leftarrow\frac{2}{2||\X^{(i)}||_2^2+\lambda+\lambda_1/||\W_{(i)}||_2}\left(\A^T\X^T\X^{(i)}-\bb_i\right)$\;\label{step:else} }
}
}
\end{algorithm}

\vspace{-3mm}
\section{Empirical Comparisons}
\label{sec:empir-comp}
\vspace{-2mm}

In this section, we evaluate the performance of the four
methods discussed above on both synthetic and real data.  In
particular, we compare the randomized CUR algorithm of Mahoney and Drineas~\cite{CUR_PNAS,DMM08_CURtheory_JRNL} to our \glreg~ (of Problem~\ref{prob:glreg}),
and we compare the SPCA algorithm proposed by Zou \emph{et al.}~\cite{ZHT06} to
our \glpca~ (of Problem~\ref{prob:cur-spca}).  We have also compared
 against the  SPCA algorithm of Witten \emph{et al.}~\cite{Witten}, and we
 found the results to be very similar to those of Zou \emph{et al.}

\vspace{-3mm}
\subsection{Simulations}
\label{sec:simulations}
\vspace{-2mm}

We first consider synthetic examples of the form $\X=\wh\X + \mathbf{E},$
where $\wh\X$ is the underlying signal matrix and $\mathbf{E}$ is a matrix of
noise.  In all our simulations, $\mathbf{E}$ has i.i.d. $\n(0,1)$
entries, while the signal $\wh\X$ has one of the following forms:
\begin{enumerate}
\item[Case I)] $\wh \X=[\mathbf{0}_{n\times(p-c)};\wh \X^*]$ where the
  $n\times c$ matrix $\wh\X^*$ is the nonzero part of $\wh\X$. In
  other words, $\wh\X$ has $c$ nonzero columns and does not
  necessarily have a low-rank structure.
\item[Case II)] $\wh\X = \U\V^T$ where $\U$ and
$\V$ each consist of $k<p$ orthogonal columns.
In addition to being low-rank, $\V$ has entire rows equal to zero
(\emph{i.e.} it is row-sparse).
\item[Case III)] $\wh\X = \U\V^T$ where $\U$ and
$\V$ each consist of $k<p$ orthogonal columns.  Here $\V$ is low-rank
and sparse, but
the sparsity is not structured (\emph{i.e.} it is scattered-sparse).
\end{enumerate}

A successful method attains low reconstruction error of the true
signal $\wh\X$ and has high precision
in identifying correctly the zeros in the underlying model.  As
previously discussed, the four methods optimize for different types of
reconstruction error.  Thus, in comparing CUR and \glreg, we
use the regression-type reconstruction error
$\Err_\mathrm{reg}(\I) = ||\wh\X - \X^\I\X^{\I+}\X||_F$, whereas for the comparison of
SPCA and \glpca, we use the PCA-type error
$\Err(\V)=||\wh\X-\X\V\V^+||_F.$

Table~\ref{tab:simulation} presents the simulation results from the
three cases. All
comparisons use $n=100$ and $p=1000$. In Case II and III, the
signal matrix has rank $k=10$. The underlying sparsity level is $20\%$,
\emph{i.e.} $80\%$ of the entries of $\wh\X$ (Case I) and $\V$ (Case
II\&III) are zeros. Note that all methods except for
\glreg~ require the rank $k$ as an input, and we always take it to
be 10 even in Case I.
For easy comparison, we have tuned each method to have  the correct total
number of zeros.
The results are averaged
over 5 trials.

\begin{table}[!h]
  \centering
\begin{tabular}{lllll}
  \toprule
  &Methods    &  Case I & Case II & Case III \\
  \midrule
  \multirow{2}{*}{$\Err_\mathrm{reg}(\I)$} &  CUR & 316.29 (0.835) &   315.28 (0.797) & 315.64 (0.166)\\
                    &\glreg &   316.29 (0.989)    &  315.28 (0.750)  &  315.64 (0.107)\\
  \midrule
  \multirow{2}{*}{$\Err(\V)$}&SPCA &  177.92 (0.809)& 44.388 (0.799) &44.995 (0.792)\\
  &\glpca &   141.85 (0.998)   &  37.310 (0.767)    & 45.500 (0.804)\\
  \bottomrule
\end{tabular}
\caption{Simulation results:~The reconstruction errors and the
percentages of correctly identified zeros (in parentheses).}
  \label{tab:simulation}
\end{table}

We notice in Table~\ref{tab:simulation} that the two regression-type
methods CUR and \glreg~ have very similar performance. As we
would expect, since CUR only uses information in the top $k$ singular
vectors, it does slightly worse than \glreg~ in terms of
precision when the underlying signal is not low-rank (Case I). 
In addition, both methods
perform poorly if the sparsity is not structured as in Case III.
The two PCA-type methods perform similarly as well. Again, the group
lasso method seems to work better in Case I. 
We note 
that the precisions reported here are based on
element-wise sparsity---if we were measuring row-sparsity, methods like
SPCA would perform poorly since they do not encourage entire rows to be zero.

\vspace{-3mm}
\subsection{Microarray example}
\label{sec:microarray-example}
\vspace{-2mm}

We next consider a microarray dataset of soft tissue tumors
studied by Nielsen \emph{et al.}~\cite{LancetCancer02}.
Mahoney and Drineas~\cite{CUR_PNAS} apply CUR to this dataset  of $n = 31$
tissue samples and $p = 5520$ genes.  As with the simulation results, we use two sets of 
comparisons: we compare CUR with \glreg, and we compare SPCA
with \glpca.  Since  we do not observe the underlying truth
$\wh \X$, we take $\Err_\mathrm{reg}(\I) = ||\X - \X^\I\X^{\I+}\X||_F$
and $\Err(\V)=||\X-\X\V\V^+||_F.$  Also, since we do not observe the
true sparsity, we cannot measure the precision as we do in Table
\ref{tab:simulation}.  The left plot in Figure \ref{fig:pnas_compare} shows
$\Err_\mathrm{reg}(\I)$ as a function of $|\I|$.  We see that CUR and
\glreg~ perform similarly.  (However, since CUR is a randomized
algorithm, on every run it gives a different result.  From a practical
standpoint, this feature of
CUR can be disconcerting to biologists wanting to report a single set
of important genes.  In this light, \glreg~ may be thought of as an 
attractive non-randomized alternative to~CUR.)
The right plot of Figure \ref{fig:pnas_compare} compares
\glpca~ to SPCA (specifically, Zou \emph{et
  al.}~\cite{ZHT06}).  Since SPCA does not explicitly enforce
row-sparsity, for a
gene to be not used in the model requires \emph{all} of the ($k=4$) columns
of $\V^*$ to exclude it.  This likely explains the advantage of \glpca~ over 
SPCA seen in the figure.

\begin{figure}
  \centering
  \psfrag{err1}[c][c]{\scriptsize $\Err_\mathrm{reg}(\I)$}
  \psfrag{err2}[c][c]{\scriptsize $\Err(\V)$}
  \psfrag{NumberOfGenes}[c][c]{\scriptsize Number of genes used}
  \psfrag{dataset}[c][c]{\scriptsize \bf Microarray Dataset}
  \psfrag{GL-Reg}[c][c]{\tiny \glreg}
  \psfrag{GL-PCA}[c][c]{\tiny \glpca}
  \psfrag{CUR}[c][c]{\tiny CUR}
  \psfrag{SPCA}[c][c]{\tiny SPCA}
  \includegraphics*[width=0.45\textwidth, viewport=0 30 480 420, clip]{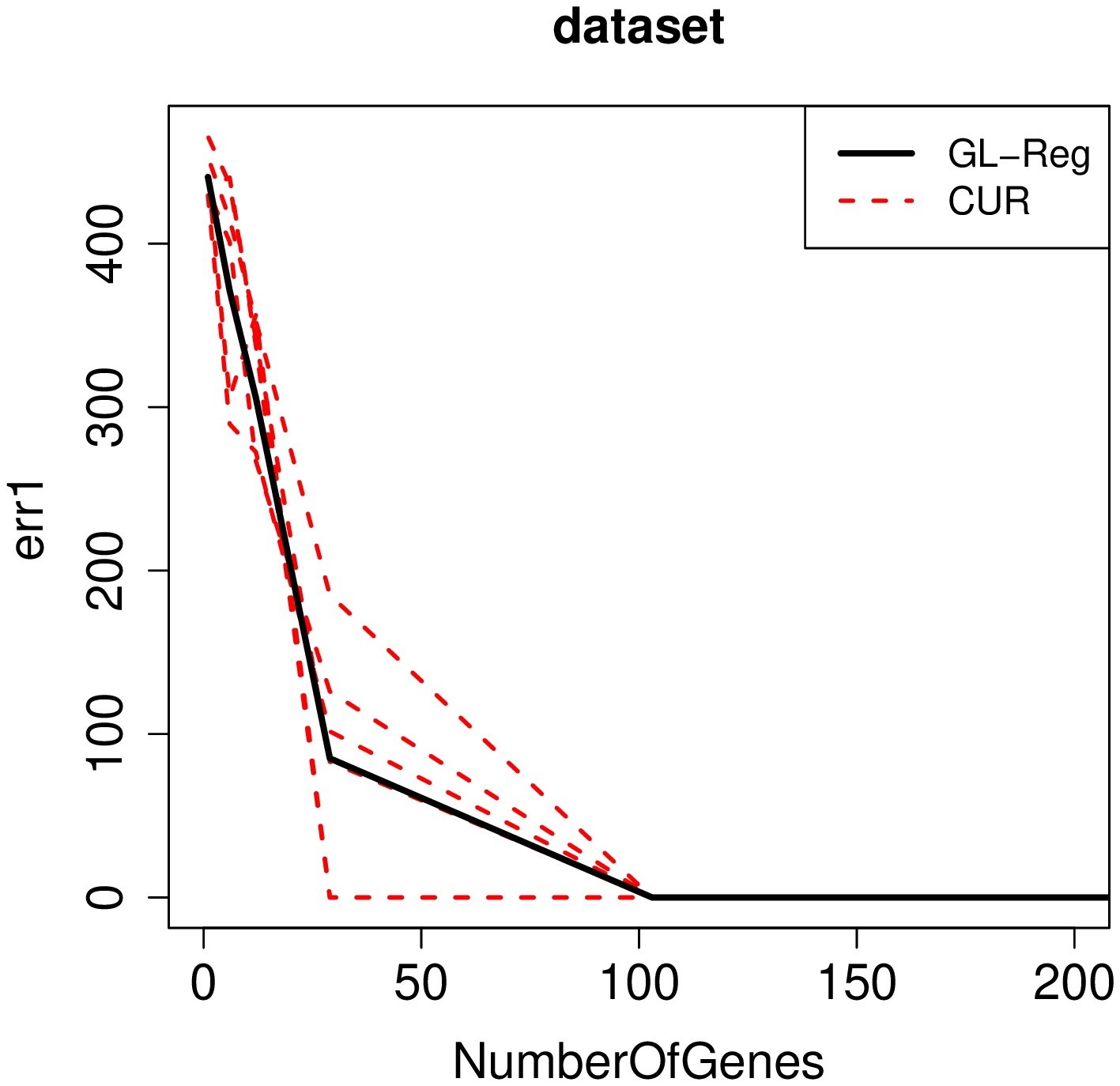}
  \includegraphics*[width=0.45\textwidth, viewport=0 30 480 420, clip]{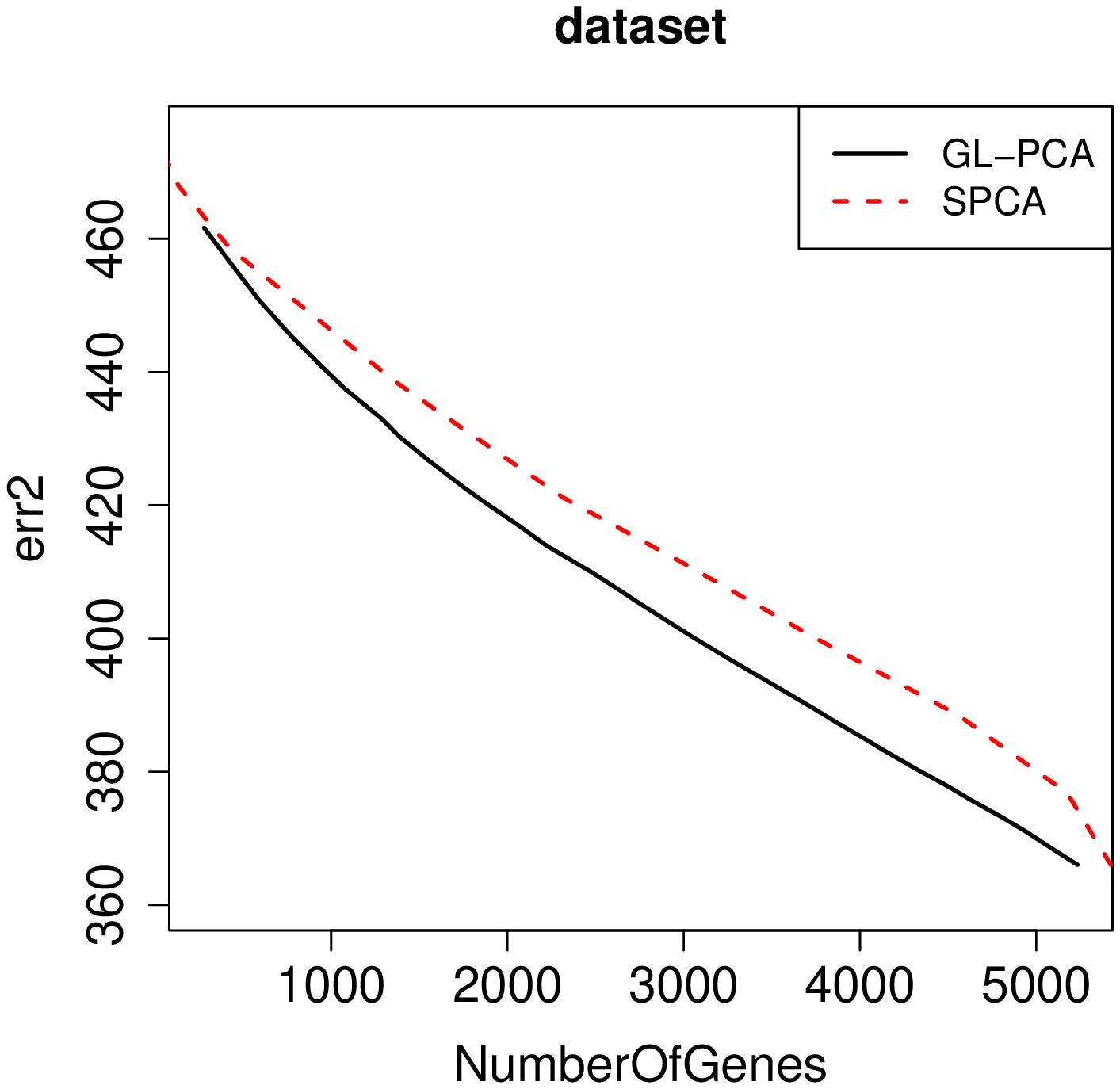}
  \caption{Left: Comparison of CUR, multiple runs, with \glreg; Right:
    Comparison of \glpca~ with SPCA (specifically, Zou \emph{et al.}~\cite{ZHT06}).}
  \label{fig:pnas_compare}
\end{figure}

\vspace{-4mm}
\section{Justification of Algorithm~\ref{alg:groupLasso}}
\label{sec:deriv-alg}
\vspace{-2mm}

The algorithm alternates between minimizing with respect to $\A$ and
$\B$ until convergence.

\textbf{Solving for $\A$ given $\B$:~} If $\B$ is fixed, then the regularization penalty
in~\eqref{eq:groupLasso} can be ignored, in which case the optimization
problem becomes $\min_{\A} ||\X-\X\B\A^T||_F^2$ subject to $\A^T\A=I$.
This problem was considered by Zou \emph{et al.}~\cite{ZHT06}, who showed
that the solution is obtained by computing the SVD of $(\X^T\X)\B$ as
$(\X^T\X)\B=\U\D\V^T$ and then setting $\wh{\A} = \U\V^T$.
This explains step~\ref{step:AgivenB} in Algorithm~\ref{alg:groupLasso}.

\textbf{Solving for $\B$ given $\A$:}
If $\A$ is fixed, then~\eqref{eq:groupLasso} becomes an unconstrained
convex optimization problem
in $\B$.
The subgradient equations (using that $\A^T\A=\II_k$) are
\begin{align}\label{eq:subgradient}
2\B^T\X^T\X^{(i)}-2\A^T\X^T\X^{(i)}+2\lambda
\B_{(i)}^T+\lambda_1\s_i=\0;\quad i=1,\dots,p,
\end{align}
where the subgradient vectors $\s_i=\B_{(i)}^T/||\B_{(i)}||_2$ if
$\B_{(i)}\neq \0$, or $||\s_i||_2 \le 1$ if $\B_{(i)}=\0$.
Let us define
$
\bb_i = \sum_{j\neq i}(\X^{(j)T}\X^{(i)})\B_{(j)}^T
    = \B^T\X^T\X^{(i)} -||\X^{(i)}||_2^2\B^T_{(i)},
$
so that the subgradient equations can be written as
\begin{align}\label{eq:subgradient2}
\bb_i+(||\X^{(i)}||_2^2+\lambda)\B_{(i)}^T-\A^T\X^T\X^{(i)}+(\lambda_1/2)\s_i=\0.
\end{align}

The following claim explains Step~\ref{step:if} in
Algorithm~\ref{alg:groupLasso}.
\begin{claim}\label{claim1}
  $\B_{(i)}=\0$ if and only if $||\A^T\X^T\X^{(i)}-\bb_i||_2\leq\lambda_1/2$.
\end{claim}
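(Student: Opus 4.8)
The plan is to analyze the subgradient condition \eqref{eq:subgradient2} at a row $\B_{(i)}$ and determine exactly when $\B_{(i)}=\0$ is consistent with it. First I would substitute $\B_{(i)}=\0$ into \eqref{eq:subgradient2}. Since the definition of $\bb_i$ involves only the rows $\B_{(j)}$ with $j\neq i$, the vector $\bb_i$ does not depend on $\B_{(i)}$, so setting $\B_{(i)}^T=\0$ simply kills the term $(\|\X^{(i)}\|_2^2+\lambda)\B_{(i)}^T$ and leaves
\[
\bb_i-\A^T\X^T\X^{(i)}+(\lambda_1/2)\s_i=\0,
\]
i.e. $\s_i = \tfrac{2}{\lambda_1}\bigl(\A^T\X^T\X^{(i)}-\bb_i\bigr)$. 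By convexity of the objective in $\B$ (for fixed $\A$), $\B_{(i)}=\0$ is optimal for that block if and only if there exists a valid subgradient vector $\s_i$ with $\|\s_i\|_2\le 1$ satisfying this equation. Such an $\s_i$ exists precisely when $\bigl\|\tfrac{2}{\lambda_1}(\A^T\X^T\X^{(i)}-\bb_i)\bigr\|_2\le 1$, which rearranges to $\|\A^T\X^T\X^{(i)}-\bb_i\|_2\le \lambda_1/2$. This gives both directions of the ``if and only if.''

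The one point that needs a word of care is the passage between ``$\B_{(i)}=\0$'' as a statement about the global minimizer and the block subgradient condition \eqref{eq:subgradient2} for index $i$ alone. Because \eqref{eq:groupLasso} (with $\A$ fixed) is an unconstrained convex problem, $\B$ is a global minimizer iff $\0$ lies in the subdifferential, which decouples across rows into the $p$ conditions \eqref{eq:subgradient2}; so reasoning about the $i$-th condition in isolation is legitimate, treating $\bb_i$ as a fixed vector (it is determined by the other rows). I would state this decoupling explicitly, then carry out the substitution and norm bound above.

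I do not anticipate a genuine obstacle here — the argument is the standard group-lasso ``soft-thresholding'' computation applied blockwise, and all the needed structure (convexity in $\B$, the subgradient form of the group penalty, the explicit $\bb_i$) is already laid out in the preceding display. The only mild subtlety worth flagging is that the claim is about membership of $\0$ in an optimal solution set rather than uniqueness, so the ``if and only if'' is really between the norm inequality and the \emph{consistency} of $\B_{(i)}=\0$ with optimality; when $\lambda>0$ the block is in fact strictly convex so the optimal $\B_{(i)}$ is unique given the other rows, making the equivalence clean.
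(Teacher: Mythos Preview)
Your proposal is correct and follows essentially the same approach as the paper: both analyze the block subgradient condition \eqref{eq:subgradient2} and read off the norm threshold $\lambda_1/2$. The only minor stylistic difference is that for the reverse implication the paper argues by contrapositive (assuming $\B_{(i)}\neq\0$, substituting $\s_i=\B_{(i)}^T/\|\B_{(i)}\|_2$, and showing the resulting norm strictly exceeds $\lambda_1/2$), whereas you argue directly via existence of a feasible $\s_i$ plus block strict convexity from $\lambda>0$; both are valid and equally short.
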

\begin{proof}
  First, if $\B_{(i)}=\0$, the subgradient
  equations~\eqref{eq:subgradient2} become
  $\bb_i-\A^T\X^T\X^{(i)}+(\lambda_1/2)\s_i=\0$. Since $||\s_i||_2 \le 1$ if $\B_{(i)}=\0$,
  we have $||\A^T\X^T\X^{(i)}-\bb_i||_2\leq\lambda_1/2$.
To prove the other direction, recall that $\B_{(i)}\neq \0$ implies
$\s_i=\B_{(i)}^T/||\B_{(i)}||_2$. Substituting this expression into
\eqref{eq:subgradient2}, rearranging terms, and taking the norm on both sides, we get
   $2||\A^T\X^T\X^{(i)}-\bb_i||_2 =
   \left(2||\X^{(i)}||^2_2+2\lambda+\lambda_1/||\B_{(i)}||_2\right)||\B_{(i)}||_2
  > \lambda_1.$
\end{proof}
\noindent
By Claim~\ref{claim1}, $||\A^T\X^T\X^{(i)}-\bb_i||_2>\lambda_1/2$
  implies that  $\B_{(i)}\neq 0$ which further implies
  $\s_i=\B_{(i)}^T/||\B_{(i)}||_2$. Substituting into \eqref{eq:subgradient2} gives
Step~\ref{step:else} in Algorithm~\ref{alg:groupLasso}.

\vspace{-3mm}
\section{Conclusion}
\label{sec:conclusion}
\vspace{-2mm}

In this paper, we have elucidated several connections between two 
recently-popular matrix decomposition methods that adopt very different 
perspectives on obtaining interpretable low-rank matrix decompositions.
In doing so, we have suggested two optimization problems, \glreg~ and 
\glpca, that highlight similarities and differences between the two 
methods.
In general, SPCA methods obtain interpretability by modifying 
an existing intractable objective with a convex regularization term that 
encourages sparsity, and then \emph{exactly} optimizing that modified 
objective.
On the other hand, CUR methods operate by using randomness and approximation 
as computational resources to optimize \emph{approximately} an intractable
objective, thereby implicitly incorporating a form of regularization into 
the steps of the approximation algorithm.
Understanding this concept of \emph{implicit regularization via approximate 
computation} is clearly of interest more generally, in particular for 
applications where the size scale of the data is expected to increase.

\subsubsection*{Acknowledgments}
We would like to thank Art Owen and Robert Tibshirani for encouragement
and helpful suggestions.  Jacob Bien was supported by the Urbanek
Family Stanford Graduate Fellowship, and Ya Xu was supported by the Melvin 
and Joan Lane Stanford Graduate Fellowship.
In addition, support from the NSF and AFOSR is gratefully acknowledged.

\newpage

\end{document}